\setlist{leftmargin=3.1mm}
\newtheorem{theorem}{Theorem}
\newtheorem{lemma}{Lemma}
\def\BibTeX{{\rm B\kern-.05em{\sc i\kern-.025em b}\kern-.08em
    T\kern-.1667em\lower.7ex\hbox{E}\kern-.125emX}}
\begin{document}

\title{Age of Information Analysis for \\Dynamic Spectrum Sharing\vspace{-3mm}}

\author{\IEEEauthorblockN{Yao Zhao\IEEEauthorrefmark{1},
Bo Zhou\IEEEauthorrefmark{2}, Walid Saad\IEEEauthorrefmark{2},
and Xiliang Luo\IEEEauthorrefmark{1} \\}
\IEEEauthorblockA{
\IEEEauthorrefmark{1}School of Information Science and Technology, ShanghaiTech University, Shanghai, China, \\
Email: {\tt \{zhaoyao1,luoxl\}@shanghaitech.edu.cn}\\
\IEEEauthorrefmark{2}Wireless@VT, Bradley Department of Electrical and Computer Engineering, Virginia Tech, Blacksburg, VA, USA,\\
Email: {\tt \{ecebo,walids\}@vt.edu}
\thanks{This research was supported in part by the National Natural Science Foundation of China under Grant No. 61971286, and the U.S. National Science Foundation under Grant CNS-1836802, and the Office of Naval Research under MURI Grant 1000011132.}
}\vspace{-10mm}}

\maketitle

\begin{abstract}
Timely information updates are critical to time-sensitive applications in networked monitoring and control systems. In this paper, the problem of real-time status update is considered for a cognitive radio network (CRN), in which the secondary user (SU) can relay the status packets from the primary user (PU) to the destination. In the considered CRN, the SU has opportunities to access the spectrum owned by the PU to send its own status packets to the destination. The freshness of information is measured by the age of information (AoI) metric. The problem of minimizing the average AoI and energy consumption by developing new optimal status update and packet relaying schemes for the SU is addressed under an average AoI constraint for the PU. This problem is formulated as a constrained Markov decision process (CMDP). The monotonic and decomposable properties of the value function are characterized and then used to show that the optimal update and relaying policy is threshold-based with respect to the AoI of the SU. These structures reveal a tradeoff between the AoI of the SU and the energy consumption as well as between the AoI of the SU and the AoI of the PU. An asymptotically optimal algorithm is proposed.
Numerical results are then used to show the effectiveness of the proposed policy.
\end{abstract}

\begin{IEEEkeywords}
Age of information, relay network, constrained Markov decision process (CMDP), scheduling.
\end{IEEEkeywords}
\vspace{-2.0mm}
\section{Introduction}
\vspace{-2.0mm}
Maintaining timely and fresh information updates is an essential part of many emerging Internet of Things (IoT) applications, such as vehicular networks and cyber-physical system monitoring \cite{Kim2012,Walid2019}. To quantify such information freshness, the concept of \emph{age of information (AoI)} was recently proposed \cite{Kaul2012}. In essence, the AoI is defined as the elasped time since the most recent received update was generated at an end-device. 

Many recent works have addressed pertinent AoI challenges \cite{Kaul2012,Wu2018,Wang2019,Zhou2018,Hsu2017,Ceran2019,Bedewy2019,Yates2019}. For instance, the authors in \cite{Wu2018} derived the optimal status update policies to minimize the average AoI for an energy harvesting sensor equipped with a battery. The work in \cite{Wang2019} studied the optimal packet preempting policy that can be used to minimize the average AoI with rate-limited links. In \cite{Zhou2018}, the authors considered the average AoI minimization problem in a real-time IoT monitoring system, in which the IoT device incurred costs for sampling physical processes and updating its packets. The work in \cite{Hsu2017} investigated the user scheduling problem in a wireless broadcast network where massive users sent timely updates to a common destination through a shared channel. Transmission scheduling for AoI minimization under different automatic repeat request (ARQ) mechanisms was discussed in \cite{Ceran2019}. Different from the aforementioned works \cite{Wu2018,Wang2019,Zhou2018,Hsu2017,Ceran2019}, where the packets would be transmitted immediately when they are generated at the sources, there are also some works \cite{Bedewy2019,Yates2019} that consider scenarios in which the arriving packets at the sources are queued before being transmitted. It was shown that a last-come first-serve (LCFS) policy achieves age-optimality for multi-server networks \cite{Bedewy2019}. The status update system with multiple sources was discussed for different queueing patterns, LCFS and first-come first-serve (FCFS), respectively in \cite{Yates2019}. However, none of the aforementioned works have considered the problem of AoI and status updates for cognitive radio networks (CRNs), in which secondary users (SUs) must share the spectrum of the licensed, primary users (PUs).

The main contribution of this work lies in the characterization
of the optimal update and relaying policy which jointly minimizes the average AoI of the SU and its energy consumption under a constraint of on AoI performance requirement of the PU in a CRN. We consider a CRN in which the PU and the SU need to update the status information packet of their associated physical processes to a common destination. In the spectrum overlay protocol, the SU can opportunistically access the spectrum of the PU \cite{Le2008}. The SU can relay the packets generated at the PU in exchange for opportunities to forward its own packets to the destination through the spectrum owned by the PU. Such a consideration is different from the work in \cite{Gu2019}, in which no relaying procedure is involved but the PU and the SU may generate interference to each other. Since the relaying procedure will incur an additional AoI cost to the PU, one must carefully design the status update and relaying policy for the SU, such that the PU's AoI requirement is satisfied. To capture the impact that the SU's policy has on the PU's AoI, we formulate the joint AoI and energy consumption minimization problem as a constrained Markov decision process (CMDP).
We characterize two different threshold-type structures for the optimal updating and relaying policy of the SU. We show that the AoI of the PU affects the policy of the SU only when there is a packet arriving at the PU. These structures are further verified through numerical results. 

\vspace{-1.5mm}
\section{System Model}
\vspace{-1.0mm}

We consider a CRN consisting of one PU and one SU. Both the PU and the SU monitor their corresponding time-varying physical processes and transmit the update packets on the status of their physical processes to a common destination timely over a shared noiseless channel. In our model, the PU and the SU will not transmit their packets to the destination simultaneously. Hereinafter, we use the terms ``primary packet'' and ``secondary packet'' to refer to the packet originating from the PU and the SU, respectively.
We consider a time-slotted system with the slots indexed by $t=0,1,\cdots$. The transmission time of one packet is assumed to be one slot. 

We assume that the packet arrival at the PU follows a Bernoulli distribution with the parameter $p$. We define a variable $\Lambda_p(t)=1$ such that if a primary packet arrives at the PU at the beginning of time slot $t$, and $\Lambda_p(t)=0$ otherwise. Thus $\mathbb{P}(\Lambda_p(t)=1)=p$. Since there are no buffers, the primary packet would be immediately transmitted upon arrival. The SU needs to cognitively make a decision on whether to occupy the shared channel at the beginning of each time slot. When there is no primary packet arriving at the PU, the SU can occupy the shared channel freely and update the status of its physical process on demand. If the SU wants to generate and transmit a secondary packet to the destination upon the arrival of a primary packet, it will have to receive the primary packet from the PU first and then forward it to the destination in the next time slot. Obviously, it would take two time slots to the destination if the primary packet is relayed by the SU. Therefore, we need to carefully design the updating process of the SU to improve its AoI performance, while still guaranteeing the AoI performance of the PU.

\subsection{Age of Information Model}

We use the AoI metric to measure the information timeliness from the perspective of the destination. The AoI of the PU and the SU at time slot $t$ will be given by $A_p(t)$ and $A_s(t)$, respectively. Then, the evolution of the AoI of the PU can be expressed as follows.
\begin{equation}
    A_p(t+1)=
    \begin{cases}
     1,
     \begin{array}{l}
        \mbox{if the primary packet is directly}\\
        \mbox{transmitted to the destination,}
     \end{array}\\
     2,
     \begin{array}{l}
        \mbox{if the primary packet is relayed}\\
        \mbox{to the destination,}
     \end{array}\\
     A_p(t)+1, \mbox{otherwise.}
    \end{cases}
\end{equation}

The AoI would decrease to one once a primary packet is transmitted directly to the destination, because it takes one slot to complete the transmission, and two if transmitted by the SU due to the additional slot needed for relaying as shown in Fig. 1(a). Similarly, we can write the evolution of the SU's AoI as follows.
\begin{equation}
    A_s(t+1)\hspace{-1mm}=\hspace{-1mm}
    \begin{cases}
     1, \mbox{ if a secondary packet is transmitted},\\
     A_s(t)+1, \mbox{ otherwise.}
    \end{cases}\label{ageDynamic}
\end{equation}

\begin{figure}[t]
    \begin{minipage}{0.45\linewidth}
    \centering
    \centerline{\includegraphics[scale=0.22]{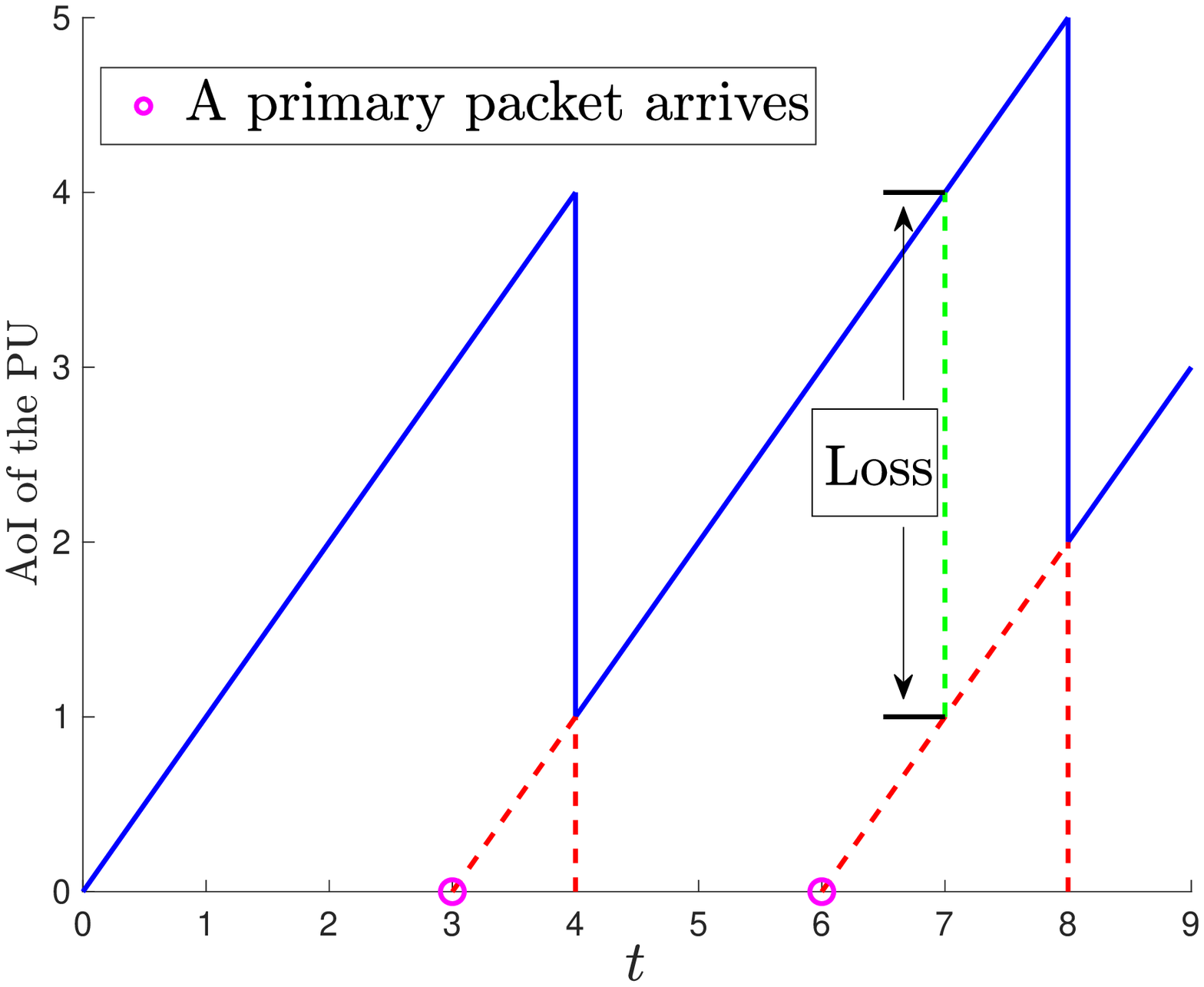}}
    \centerline{(a)}
    \medskip
    \end{minipage}
    \hfill
    \begin{minipage}{0.49\linewidth}
    \centering
    \centerline{\includegraphics[scale=0.22]{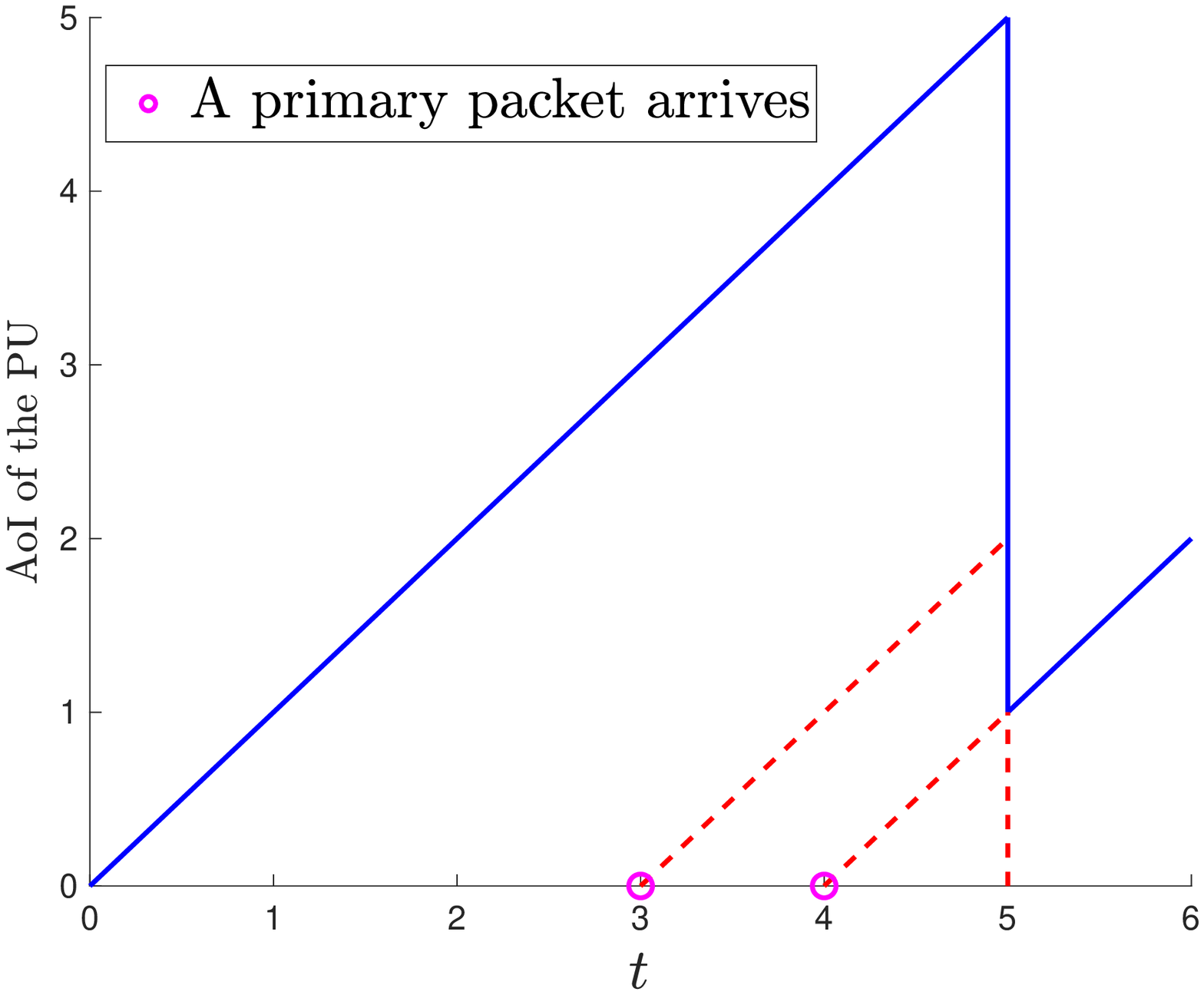}}
    \centerline{(b)}
    \medskip
    \end{minipage}
\caption{Two AoI evolution traces for the PU.
(a): The red circle means a primary packet arrives. The first one is directly transmitted to the destination by the PU itself, the second one is relayed to the SU then transmitted to the destination by the SU, the green line indicates the additional loss caused by relaying;
(b): Two primary packets arrive in succession. The first one is relayed to the SU but gets discarded since the later one is more timely.}
\vspace{-3mm}
\end{figure}

\section{Problem Formulation}
%\vspace{-1.5mm}

\subsection{CMDP Formulation}

Due to the shared channel to the destination, the AoI of the PU and the AoI of the SU are strongly correlated. If the SU decides to generate and transmit its own packet while receiving the primary packet to its buffer at time slot $t$, the PU would suffer additional AoI cost compared with transmitting to the destination directly. Thus, the policy of the SU must balance the fundamental tradeoff between its own performance and that of the PU. We will next cast this problem into a CMDP to allow an AoI performance guarantee for the PU. The components of the CMDP are given as follows.

\begin{itemize}
\item \textbf{States}: 
The system state at time slot $t$ is defined by a 4-tuple, i.e. $\boldsymbol{s}(t)=\left(A_p(t),A_s(t),\Lambda_p(t),\Lambda_s(t)\right)\in\mathcal{S}$, where $\mathcal{S}$ denotes the system state space. Here $\Lambda_s(t)=1$ means that there is a primary packet at the SU at the beginning of time slot $t$, which was transmitted from the PU in the previous time slot $t-1$. Note that $A_s(t)=1$ if $\Lambda_s(t)=1$, because the SU will relay the primary packet immediately after it transmits its own packet;
    
\item \textbf{Actions}: 
There are three types of states, each of which is associated with different action set: \textit{i)}. The first type is $\boldsymbol{s}=(A_p,A_s,0,0)$, for which there is no primary packet arriving at the PU, and the SU can freely choose to generate and transmit secondary packet or stay silent. \textit{ii)}. The second is $\boldsymbol{s}=(A_p,A_s,1,0)$. In this case, the SU can generate and transmit a secondary packet while the primary packet is transmitted to the SU, or the SU stays silent and the PU would transmit to the destination directly. \textit{iii)}, The third is $\boldsymbol{s}=(A_p,A_s,0\;\mbox{or}\;1,1)$. In this case, either the SU needs to transmit the primary packet, or the PU transmits the primary packet if there is a newly arrived primary packet and the primary packet at the SU would be discarded, because it is staler than the newly arrived primary packet at the PU, as seen in Fig. 1(b). The action taken in time slot $t$ is denoted by $a(t)$. For simplicity, we use $\mathcal{A}=\{1,2\}$ to indicate the two different actions. Action $1$ means the SU updates its status. For the second type state, the SU would be receiving the primary packet from the PU to its buffer while transmitting the secondary packet. Action $2$ means the SU keeps silent. Thus the PU would transmit a primary packet to the destination if there is any;
    
\item \textbf{Transition Probabilities}: 
Let $\mathbb{P}(\boldsymbol{s}_1,\boldsymbol{s}_2;1)$ be the probability that the state changes to $\boldsymbol{s}_2$ under taking action $1$ in state $\boldsymbol{s}_1$. The derivation of the transition probabilities is quite straightforward and mainly depends on the random process of the packet arrival at the PU. Due to space limitations, here, we only give an example of the second type states as follows,
\begin{subequations}
    \begin{align}
    &\mathbb{P}((A_p,A_s,1,0),(A_p+1,1,1,1);1)=p,\\
    &\mathbb{P}((A_p,A_s,1,0),(A_p+1,1,0,1);1)=1-p,\\
    &\mathbb{P}((A_p,A_s,1,0),(1,A_s+1,1,0);2)=p,\\
    &\mathbb{P}((A_p,A_s,1,0),(1,A_s+1,0,0);2)=1-p;
    \end{align}
\end{subequations}

\item \textbf{Cost}: 
The immediate cost function is defined as the weighted-sum of the AoI of the SU and the energy consumption:
\begin{equation}
    B(t)=A_s(t+1)+kC_ea(t),
\end{equation}
where $k$ is a constant to balance these two terms and $C_e$ is the energy cost needed to generate a secondary packet.
\end{itemize}

We can now formulate our joint AoI and energy consumption minimization problem as an infinite horizon average cost CMDP, given as follows.
\begin{subequations}\label{CMDP}
    \begin{align}
    &B^*=\min_{\pi}\limsup_{T\rightarrow+\infty}\frac{1}{T}\mathbb{E}\left[\sum_{t=1}^{T}B(t)\right],\\
    &\text{s.t.}\limsup_{T\rightarrow+\infty}\frac{1}{T}\mathbb{E}\left[\sum_{t=1}^{T} A_p(t)\left(\Lambda_p(t)\& a(t)\right)\right]\le d\label{Constraint},
    \end{align}
\end{subequations}
where the expectation is taken with respect to a certain policy, and $d$ represents the maximum tolerance of expected AoI cost to the PU per time slot. Here, $x\& y=1$($0$), only when $x=1$ and $y=1$ (otherwise). Obviously, when at least one of $\Lambda_p(t)$ and $a(t)$ equals zero, the PU will not be affected by the SU at all. When $\Lambda_p(t)=a(t)=1$, namely, a primary packet arrives and is relayed to the SU, the additional cost to the PU is exactly its current AoI. Hence constraint \eqref{Constraint} properly captures the effect caused by the policy at the SU.

\subsection{CMDP Relaxation via Lagrange Method}

To obtain the optimal policy for the CMDP in \eqref{CMDP}, we adopt the Lagrange method to transform the CMDP into a un-constrained MDP parameterized by introducing a Lagrange multiplier $\lambda$. The Lagrange cost function at time slot $t$ is defined as
\begin{equation}
    L(\boldsymbol{s}(t),a(t);\lambda)=B(t)+\lambda A_p(t)\left(\Lambda_p(t)\& a(t)\right).
\end{equation}
Accordingly, we have the unconstrained MDP with the optimal objective value denoted by $L^*_{\lambda}$, i.e. 
\begin{equation}
    L^*_{\lambda}=\min_{\pi}\limsup_{T\rightarrow+\infty}\frac{1}{T}\mathbb{E}\left[\sum_{t=1}^{T}L(\boldsymbol{s}(t),a(t);\lambda)\right].\label{uMDP}
\end{equation}
Fortunately, the relation between the optimal value of the constraint CMDP in \eqref{CMDP} and the optimal value of the relaxed unconstrained MDP in \eqref{uMDP} can be expressed as follows \cite{Sennott1993}.
%\vspace{-1.5mm}
\begin{equation}
    B^*=\max_{\lambda>0}L^*_\lambda-\lambda d.
\end{equation}
There is also a precise relation between the optimal policy of the CMDP and the optimal policy of the unconstrained MDP. According to \cite{Sennott1993} and focusing on the stationary policy, the following lemma holds for the CMDP with single constraint.
\begin{lemma}
    The optimal policy $\pi^*$ for the CMDP in \eqref{CMDP} can be expressed as a mixture of two deterministic stationary policies $\pi^*_{\lambda_1}$ and $\pi^*_{\lambda_2}$, namely,
    \begin{equation}
        \pi^*=\alpha\pi^*_{\lambda_1}+(1-\alpha)\pi^*_{\lambda_2},
    \end{equation}
    where $\alpha\in[0,1]$ is a randomization parameter and $\pi^*_{\lambda}$ is the optimal policy for the un-constrained MDP problem with the Lagrange multiplier $\lambda$.
\end{lemma}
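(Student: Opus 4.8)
The plan is to establish the lemma through the Lagrangian duality already set up in \eqref{uMDP}, combined with a concavity argument on the dual function and a constraint-matching randomization. First I would introduce, for any stationary policy $\pi$, its average constraint cost $C(\pi)=\limsup_{T\rightarrow+\infty}\frac{1}{T}\mathbb{E}[\sum_{t=1}^{T}A_p(t)(\Lambda_p(t)\&a(t))]$, so that feasibility for \eqref{CMDP} reads $C(\pi)\le d$, and let $g(\lambda)=L^*_\lambda-\lambda d$ be the dual objective. Since $L^*_\lambda$ is the pointwise infimum over policies of the affine-in-$\lambda$ maps $\lambda\mapsto \overline{B}(\pi)+\lambda C(\pi)$, where $\overline{B}(\pi)$ denotes the average cost of \eqref{CMDP} under $\pi$, the function $\lambda\mapsto L^*_\lambda$ is concave; hence $g$ is concave and attains its maximum at some $\lambda^*>0$ with $B^*=g(\lambda^*)$ by the strong-duality relation $B^*=\max_{\lambda>0}L^*_\lambda-\lambda d$.

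Next I would exploit the structure of this dual maximizer. By an envelope (Danskin-type) argument, the one-sided derivatives of $L^*_\lambda$ at $\lambda^*$ equal the smallest and largest values of $C(\pi)$ taken over the policies that are Lagrange-optimal at $\lambda^*$; because raising $\lambda$ penalizes relaying, this optimal constraint cost is non-increasing in $\lambda$, consistent with concavity. Optimality of $\lambda^*$ forces zero into the subdifferential of $g$, i.e. $d$ lies between these two one-sided derivatives. Consequently there exist two deterministic stationary policies $\pi^*_{\lambda_1}$ and $\pi^*_{\lambda_2}$, each optimal for the unconstrained MDP \eqref{uMDP} (taking $\lambda_1,\lambda_2$ equal to, or infinitesimally bracketing, $\lambda^*$), whose constraint costs obey $C(\pi^*_{\lambda_2})\le d\le C(\pi^*_{\lambda_1})$. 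I would then fix the randomization parameter as $\alpha=(d-C(\pi^*_{\lambda_2}))/(C(\pi^*_{\lambda_1})-C(\pi^*_{\lambda_2}))$ so that the mixture $\pi^*=\alpha\pi^*_{\lambda_1}+(1-\alpha)\pi^*_{\lambda_2}$ meets \eqref{Constraint} with equality.

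Finally I would verify that $\pi^*$ is simultaneously feasible and optimal. Interpreting the mixture through its occupation measure, both the average objective cost and the average constraint cost are affine in $\alpha$, so $C(\pi^*)=d$ while the Lagrangian cost of $\pi^*$ equals $L^*_{\lambda^*}$; subtracting $\lambda^* d$ yields objective value $L^*_{\lambda^*}-\lambda^* d=B^*$, certifying optimality. The main obstacle lies precisely in this last affine-interpolation step: because the AoI states are unbounded and the convex combination must be realized as a single stationary randomized policy rather than a literal time-average of two separate runs, I must invoke the occupation-measure (steady-state LP) equivalence for average-cost MDPs and ensure the mixed chain admits a well-defined stationary distribution. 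Justifying the existence of deterministic optimal policies for the unbounded-cost unconstrained MDP under the model's ergodicity, together with the validity of the envelope argument on the countable state space, is the technical crux that the cited framework of \cite{Sennott1993} is designed to supply.
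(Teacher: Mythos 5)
Your proposal is essentially correct, but it is worth noting that the paper itself does not prove this lemma at all: it is justified purely by citation to Sennott's work on constrained average-cost Markov decision chains, with the heavy lifting delegated to that reference (and the paper's technical report). What you have written is a faithful blind reconstruction of the standard argument underlying that cited result: concavity of the dual function $\lambda\mapsto L^*_\lambda-\lambda d$, a Danskin/subdifferential argument placing $d$ between the one-sided derivatives at the maximizing multiplier, selection of two Lagrange-optimal deterministic stationary policies whose constraint costs bracket $d$, and a constraint-matching randomization $\alpha$ followed by weak-duality verification of optimality. Two remarks on where your sketch and the rigorous version diverge. First, your final worry about realizing the mixture ``as a single stationary randomized policy'' is largely self-inflicted: in Sennott's formulation the mixture $\alpha\pi^*_{\lambda_1}+(1-\alpha)\pi^*_{\lambda_2}$ is an initial coin flip (follow $\pi^*_{\lambda_1}$ forever with probability $\alpha$, else $\pi^*_{\lambda_2}$ forever), under which both the objective and constraint costs are trivially affine in $\alpha$; the occupation-measure machinery is only needed for the alternative Beutler--Ross-style formulation where randomization occurs in a single state. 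Second, the step you phrase as taking $\lambda_1,\lambda_2$ ``equal to, or infinitesimally bracketing, $\lambda^*$'' is the genuine technical crux: one must show that policies optimal at $\lambda\uparrow\lambda^*$ and $\lambda\downarrow\lambda^*$ converge (via finiteness of the per-state action sets and a diagonal argument) to policies that are optimal \emph{at} $\lambda^*$ with the bracketing constraint costs preserved in the limit --- on a countable state space with unbounded AoI costs this requires exactly the verification conditions that Sennott's framework establishes, so your closing attribution of that burden to the citation is appropriate. In short: your route is the right one and is, in outline, the proof of the result the paper invokes; the paper simply buys all of it by citation.
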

In practice, $\lambda_1$ and $\lambda_2$ can be calculated through some iterative Lagrange multiplier estimation methods such as the Robbins–Monro algorithm \cite{spall2005}.

\section{Structures of Optimal Policies and Algorithm Designing}
%\vspace{-1.5mm}
In this section, we focus on the unconstrained MDP in \eqref{uMDP}. According to \cite{bertsekas2012}, the following lemma holds.
\begin{lemma}
    Given the Lagrange multiplier $\lambda$, the Bellman equation can be expressed as follows,
    \begin{equation}
        L^*_\lambda+V_\lambda(\boldsymbol{s})=\min_{\mathcal{A}}\left(L(\boldsymbol{s},a;\lambda)+\mathbb{E}\left[V_\lambda(\boldsymbol{s}')\right]\right),
    \end{equation}
    where $V_\lambda(\cdot)$ is the value function reflecting a relative and differential cost for each state $\boldsymbol{s}\in\mathcal{S}$, $\boldsymbol{s}'$ is the next state of $\boldsymbol{s}$.
\end{lemma}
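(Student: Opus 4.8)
The plan is to establish the average-cost optimality equation (ACOE) by the standard \emph{vanishing discount} argument, verifying that the countable-state MDP in \eqref{uMDP} satisfies the Sennott-type conditions under which a bounded solution to the ACOE exists and the optimal average cost is independent of the initial state. First I would fix a reference state $\boldsymbol{s}_0=(1,1,0,0)$ and, for each discount factor $\beta\in(0,1)$, consider the discounted problem whose value function $V_{\lambda,\beta}$ satisfies the discounted Bellman equation
\begin{equation}
V_{\lambda,\beta}(\boldsymbol{s})=\min_{a\in\mathcal{A}}\left(L(\boldsymbol{s},a;\lambda)+\beta\,\mathbb{E}\left[V_{\lambda,\beta}(\boldsymbol{s}')\right]\right).
\end{equation}
Because the action set $\mathcal{A}=\{1,2\}$ is finite and the per-stage cost $L(\boldsymbol{s},a;\lambda)\ge 0$, this equation admits a unique nonnegative solution, and the minimizing action defines an optimal stationary discounted policy.

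The second step is to verify the conditions that make the discount vanish cleanly: (i) $V_{\lambda,\beta}(\boldsymbol{s})<\infty$ for every $\boldsymbol{s}$; (ii) the relative value function $h_\beta(\boldsymbol{s}):=V_{\lambda,\beta}(\boldsymbol{s})-V_{\lambda,\beta}(\boldsymbol{s}_0)$ is bounded below uniformly in $\beta$; and (iii) there exists a finite function $M(\boldsymbol{s})$ with $h_\beta(\boldsymbol{s})\le M(\boldsymbol{s})$ for all $\beta$. For (i) I would exhibit the stationary policy that lets the SU update whenever the channel is idle; under it the induced chain is positive recurrent and both $A_s$ and $A_p$ have finite stationary means (the latter of order $1/p$ by the Bernoulli arrivals), so the average—and hence the discounted—cost is finite. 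Conditions (ii) and (iii) would follow from the \emph{communicating} structure of the chain: every transmission resets an age to $1$ or $2$, so from any $\boldsymbol{s}$ the reference state $\boldsymbol{s}_0$ is reached in finite expected time with finite expected accumulated cost, and a cost-to-go comparison then bounds $h_\beta$ independently of $\beta$.

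The final step is to pass to the limit. With $h_\beta$ trapped between a constant and $M(\boldsymbol{s})$ for each $\boldsymbol{s}$, a diagonal extraction over a sequence $\beta_n\uparrow 1$ produces $V_\lambda(\boldsymbol{s})=\lim_n h_{\beta_n}(\boldsymbol{s})$ and the scalar $L^*_\lambda=\lim_n(1-\beta_n)V_{\lambda,\beta_n}(\boldsymbol{s}_0)$. Subtracting $V_{\lambda,\beta}(\boldsymbol{s}_0)$ from the discounted equation and writing $\beta\,\mathbb{E}[V_{\lambda,\beta}(\boldsymbol{s}')]=\beta V_{\lambda,\beta}(\boldsymbol{s}_0)+\beta\,\mathbb{E}[h_\beta(\boldsymbol{s}')]$ gives
\begin{equation}
(1-\beta)V_{\lambda,\beta}(\boldsymbol{s}_0)+h_\beta(\boldsymbol{s})=\min_{a\in\mathcal{A}}\left(L(\boldsymbol{s},a;\lambda)+\beta\,\mathbb{E}[h_\beta(\boldsymbol{s}')]\right).
\end{equation}
Taking $n\to\infty$ term by term—justified because $\mathcal{A}$ is finite and each expectation involves only the two successor states weighted by $p$ and $1-p$—yields exactly the claimed equation $L^*_\lambda+V_\lambda(\boldsymbol{s})=\min_{\mathcal{A}}(L(\boldsymbol{s},a;\lambda)+\mathbb{E}[V_\lambda(\boldsymbol{s}')])$. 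I expect the main obstacle to be step two, specifically the uniform-in-$\beta$ upper bound $h_\beta(\boldsymbol{s})\le M(\boldsymbol{s})$: since the AoI components are unbounded and the PU and SU states are coupled through the relaying decision, one must construct an explicit policy that drives an arbitrary state to $\boldsymbol{s}_0$ and carefully control its expected cost-to-go, which is the crux of verifying the Sennott conditions invoked via \cite{bertsekas2012}.
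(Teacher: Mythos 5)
Your overall strategy is sound and is, in fact, exactly the machinery hiding behind the paper's one-line justification: the paper does not prove this lemma at all, it simply invokes \cite{bertsekas2012}, and the cited existence result for the average-cost optimality equation on a countable state space with unbounded nonnegative costs is established by precisely the vanishing-discount argument you outline (finite discounted values, uniform bounds on the relative discounted value function, diagonal extraction along $\beta_n\uparrow 1$). So there is no conflict of approach; you are reconstructing the proof that the paper delegates to a reference, and your identification of the Sennott-type conditions as the substance of the matter is correct.

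One step, however, does not deliver what you claim. Your condition (ii), the uniform-in-$\beta$ lower bound $h_\beta(\boldsymbol{s})\ge -N$, does not follow from the fact that $\boldsymbol{s}_0$ is reachable from every $\boldsymbol{s}$ with finite expected cost: that first-passage comparison yields $V_{\lambda,\beta}(\boldsymbol{s})\le C(\boldsymbol{s}\to\boldsymbol{s}_0)+V_{\lambda,\beta}(\boldsymbol{s}_0)$, where $C(\boldsymbol{s}\to\boldsymbol{s}_0)$ is the expected cost accumulated before hitting $\boldsymbol{s}_0$ under a suitable policy, i.e., only your upper bound (iii). The lower bound requires $V_{\lambda,\beta}(\boldsymbol{s}_0)\le V_{\lambda,\beta}(\boldsymbol{s})+N$ uniformly over $\boldsymbol{s}$, and the symmetric first-passage argument (from $\boldsymbol{s}_0$ into $\boldsymbol{s}$) fails here because the expected cost of reaching a high-age state grows without bound in the age. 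The standard repair, and one consistent with the rest of the paper, is to first show by induction on discounted value iteration that $V_{\lambda,\beta}$ is nondecreasing in $A_p$ and $A_s$ (the discounted analogue of Lemma~\ref{LemmaFunc}); then $\min_{\boldsymbol{s}}V_{\lambda,\beta}(\boldsymbol{s})$ is attained on the finite set of states with unit age components, and you only need your first-passage bound on the finitely many differences $V_{\lambda,\beta}(\boldsymbol{s}_0)-V_{\lambda,\beta}(\boldsymbol{s}')$ for $\boldsymbol{s}'$ in that set, each of which is bounded uniformly in $\beta$. With that fix (and with ``unique nonnegative solution'' of the discounted equation weakened to ``minimal nonnegative solution,'' since uniqueness is not guaranteed under unbounded costs), your argument goes through and is a legitimate self-contained substitute for the paper's citation.
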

The Bellman equation can be solved using the relative value iteration with an arbitrary but fixed reference state $\boldsymbol{s}_0$,
\begin{equation}
    V_{n+1,\lambda}(\boldsymbol{s})=\min_{\mathcal{A}}\left(L(\boldsymbol{s},a;\lambda)+\mathbb{E}\left[V_{n,\lambda}(\boldsymbol{s}')\right]\right)-V_{n,\lambda}(\boldsymbol{s}_0).
\end{equation}

\begin{lemma}\label{LemmaFunc}
    Given the Lagrange multiplier $\lambda$, the value function $V_\lambda(\boldsymbol{s})$ is non-decreasing in $A_p$ and $A_s$, and can be decoupled as:
    \begin{equation}
        V_\lambda(\boldsymbol{s})=f(A_p)+g(A_s).
    \end{equation}
    Moreover, for the third type state, $V_\lambda(A_p,1,1,1)$ is a constant value for all $A_p$, and so is $V_\lambda(A_p,1,0,1)$.
\end{lemma}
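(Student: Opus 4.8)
The plan is to prove all three claims by structural induction on the relative value iteration $V_{n,\lambda}$, relying on the fact (from \cite{bertsekas2012}, under the communicating structure of the induced chain) that $V_{n,\lambda}\to V_\lambda$, so that any property preserved by the Bellman operator and satisfied by the initialization $V_{0,\lambda}\equiv 0$ is inherited by the limit. The base case is immediate, since the zero function is non-decreasing, trivially decomposable, and constant in $A_p$. The workhorse observation throughout is that the minimization over $\mathcal{A}=\{1,2\}$ preserves the relevant structure: a pointwise minimum of functions that are each non-decreasing, or that each share a common additive term, retains the property once the shared term is factored out of the minimization.

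First I would establish monotonicity. The Lagrange cost $L(\boldsymbol{s},a;\lambda)=A_s(t+1)+kC_e a(t)+\lambda A_p(\Lambda_p \,\&\, a)$ is non-decreasing in both $A_p$ and $A_s$, and for each fixed action the one-step transition is stochastically monotone in $(A_p,A_s)$: each AoI coordinate is either incremented ($A_p\mapsto A_p+1$, $A_s\mapsto A_s+1$) or reset to a value independent of the current state ($A_p\mapsto 1$ or $2$, $A_s\mapsto 1$). Hence, assuming $V_{n,\lambda}$ is non-decreasing, $\mathbb{E}[V_{n,\lambda}(\boldsymbol{s}')]$ is non-decreasing for each action, the per-action terms are non-decreasing, and so is their minimum; subtracting the constant $V_{n,\lambda}(\boldsymbol{s}_0)$ is harmless. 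Next I would dispatch the third-type states, which is the cleanest part. In any state $(A_p,1,\Lambda_p,1)$ the buffered primary packet is resolved within the slot: if $\Lambda_p=0$ the SU relays and $A_p'=2$, while if $\Lambda_p=1$ the stale buffered packet is discarded and the PU sends the fresh one, giving $A_p'=1$. In both branches the next value of $A_p$, the next $A_s'=2$, and the next arrival $\Lambda_p'\sim\mathrm{Bernoulli}(p)$ are all independent of the current $A_p$; the immediate cost is likewise independent of $A_p$, since the relaying decision and its Lagrange penalty were already incurred in the previous slot. Therefore $\mathbb{E}[V_{n,\lambda}(\boldsymbol{s}')]$ does not depend on $A_p$, and by induction $V_\lambda(A_p,1,1,1)$ and $V_\lambda(A_p,1,0,1)$ are each constant in $A_p$.

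For the decomposition I would induct on the claim that on the free-update states the value takes the form $f_n(A_p)+g_n(A_s)$, exploiting that the evolution of $A_s$ is governed solely by the action ($A_s'=1$ under action $1$, $A_s'=A_s+1$ under action $2$) while the cost splits into an $A_s$-piece and an $(A_p,\Lambda_p)$-piece. Factoring the common $A_p$-term out of the action minimization then leaves an $A_s$-only minimization defining $g_{n+1}$, with the residual $A_p$-dependence defining $f_{n+1}$. The main obstacle is precisely the relaying action: whenever a primary packet arrives and the SU chooses to relay, the term $\lambda A_p$ injects an explicit coupling between $A_p$ and the decision, and the successor lands in a third-type state. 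This is exactly where the constancy result proved above is indispensable—because the post-relay value is independent of $A_p$, the only $A_p$-dependence the relay branch contributes is the single explicit term $\lambda A_p$, which is confined to the $f$-component and cannot contaminate the $A_s$-component. I expect the delicate step to be verifying that this confinement propagates consistently across all three state types as the iteration references one type from another, so that $f_n$ and $g_n$ settle to well-defined limits $f$ and $g$; closing that bookkeeping is the crux of the argument.
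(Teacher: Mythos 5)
The paper itself defers every proof to its technical report, so there is no in-text proof to compare against; your attempt has to be judged on its own mathematics. Two of the three claims are handled correctly: the monotonicity induction is the standard preserved-by-the-Bellman-operator argument and is sound, and your proof of the constancy of $V_\lambda(A_p,1,1,1)$ and $V_\lambda(A_p,1,0,1)$ is exactly right --- in third-type states the immediate cost and the successor distribution do not involve the current $A_p$ (the Lagrange penalty was charged one slot earlier, when relaying was chosen), so constancy propagates through the value iteration.

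The decomposition argument, however, has a genuine gap, and it sits precisely at the step you set aside as ``bookkeeping.'' Factoring a term out of $\min_{a\in\mathcal{A}}$ is legitimate only when that term is common to the Q-values of \emph{both} actions. In a second-type state $(A_p,A_s,1,0)$ it is not: by the constancy you proved, the Q-value of update-and-relay is $\lambda A_p+c$ with $c$ a constant, while the Q-value of staying silent is
\begin{equation}
v(A_s)=A_s+1+p\,V_\lambda(1,A_s+1,1,0)+(1-p)\,V_\lambda(1,A_s+1,0,0),
\end{equation}
which is free of $A_p$ because the PU's age resets to $1$. Hence $V_\lambda(A_p,A_s,1,0)=\min\{\lambda A_p+c,\,v(A_s)\}-L^*_\lambda$, a minimum of a strictly increasing function of $A_p$ and a nondecreasing function of $A_s$. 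Such a minimum cannot be written as $f(A_p)+g(A_s)$ unless one branch is active everywhere: any separable function satisfies the rectangle identity $V(x_1,y_1)+V(x_2,y_2)=V(x_1,y_2)+V(x_2,y_1)$, and the min violates it as soon as both branches are active on a rectangle of states --- which they are in the regime of interest, since this is exactly the switch structure with an $A_p$-dependent threshold exhibited in Theorem 1 and Fig.~2(b). So the $\lambda A_p$ term is not ``confined to the $f$-component''; the minimization entangles it with $A_s$. The obstruction then propagates into first-type states as well, because their Bellman backup contains $\min\{\lambda (A_p+1)+c,\,v(1)\}$ under action $1$ but $\min\{\lambda (A_p+1)+c,\,v(A_s+1)\}$ under action $2$: the $A_p$-dependence is again not common to the two actions, so the inductive form $f_n(A_p)+g_n(A_s)$ is not preserved by the Bellman operator there either. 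Your induction cannot be closed as planned; what your argument genuinely establishes --- monotonicity plus third-type constancy, which together show that the relay Q-value is affine in $A_p$ while the silent Q-value depends only on $A_s$ --- is weaker than the lemma's separability claim, although it is, notably, already sufficient to derive the switch-type structure asserted in Theorem 1.
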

\begin{proof}
    All proofs can be found in our technical report \cite{fullVersion}.
\end{proof}

The properties of the value function can be used to show the following threshold-based structures of the optimal policy.

\begin{theorem}
The optimal policy exhibits the following threshold-based structures:

\begin{itemize}
\item 
For the first state type, i.e. $\boldsymbol{s}=(A_p,A_s,0,0)$, the optimal policy is of threshold type. Moreover, it is independent of the AoI of the PU, namely, there exists a constant $\eta$ such that
\begin{eqnarray}
\pi^*_\lambda(A_p,A_s,0,0)=
\begin{cases}
1, & \mbox{\rm if } A_s\ge \eta, \\
2, & \mbox{\rm otherwise};
\end{cases}
\end{eqnarray}
    
\item 
For the second state type, i.e., $\boldsymbol{s}=(A_p,A_s,1,0)$, if it is optimal to generate an update packet and relay the primary packet for the PU in state $\boldsymbol{s}=(A_p,A_s,1,0)$, then it is still optimal to choose this action for state $(A_p,A_s+1,1,0)$, a.k.a. a switch type structure.
\begin{eqnarray}
\pi^*_\lambda(A_p,A_s+1,1,0)=
\begin{cases}
1, &\hspace{-0.2cm} \mbox{\rm if } \pi^*_\lambda(A_p,A_s,1,0)=1, \\
2, &\hspace{-0.2cm} \mbox{\rm otherwise}.
\end{cases}
\end{eqnarray}
\end{itemize}
\end{theorem}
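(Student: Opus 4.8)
The plan is to work directly from the Bellman optimality equation of Lemma~2 together with the three structural facts established in Lemma~\ref{LemmaFunc}: monotonicity of $V_\lambda$ in $A_p$ and $A_s$, the separable form $V_\lambda=f(A_p)+g(A_s)$, and the constancy of $V_\lambda(A_p,1,1,1)$ and $V_\lambda(A_p,1,0,1)$ in $A_p$. For each state type I introduce the state--action value $Q_\lambda(\boldsymbol{s},a)=L(\boldsymbol{s},a;\lambda)+\mathbb{E}[V_\lambda(\boldsymbol{s}')]$, so that $\pi^*_\lambda(\boldsymbol{s})=\arg\min_{a\in\{1,2\}}Q_\lambda(\boldsymbol{s},a)$, and I study the sign of the scalar $\Delta(\boldsymbol{s}):=Q_\lambda(\boldsymbol{s},1)-Q_\lambda(\boldsymbol{s},2)$; action $1$ is weakly optimal exactly when $\Delta(\boldsymbol{s})\le 0$. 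Since the action dependence of the energy term contributes only a fixed offset to $\Delta$, its precise form is immaterial to the structure. Both claims will follow once I show that, for fixed $A_p$, $\Delta$ is non-increasing in $A_s$, and (for the first type) that the $A_p$-dependence cancels.

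For the first state type $\boldsymbol{s}=(A_p,A_s,0,0)$ the Lagrange penalty vanishes since $\Lambda_p=0$. Expanding the one-step transitions, action $1$ resets $A_s$ to $1$ with the two successors $(A_p+1,1,1,0)$ and $(A_p+1,1,0,0)$, while action $2$ carries $A_s$ to $A_s+1$ with successors of the form $(A_p+1,A_s+1,\cdot,0)$. Substituting $V_\lambda=f(A_p)+g(A_s)$, every action-$1$ successor contributes $f(A_p+1)+g(1)$ and every action-$2$ successor contributes $f(A_p+1)+g(A_s+1)$, so the common $f(A_p+1)$ cancels in $\Delta$. What remains has the form $\Delta=\text{(const)}-A_s-g(A_s+1)$, which is independent of $A_p$ and, because $g$ is non-decreasing, non-increasing in $A_s$. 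Hence there is a single threshold $\eta$, the least $A_s$ with $\Delta\le 0$, that does not depend on $A_p$, which is the first claim.

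For the second state type $\boldsymbol{s}=(A_p,A_s,1,0)$ the penalty is active in the action-$1$ branch: relaying adds the cost $\lambda A_p$ and leads to the third-type successors $(A_p+1,1,1,1)$ and $(A_p+1,1,0,1)$, whereas action $2$ has the PU transmit directly, resetting $A_p$ to $1$ with successors $(1,A_s+1,\cdot,0)$. Here I invoke the ``moreover'' part of Lemma~\ref{LemmaFunc}: the two action-$1$ successors have values constant in $A_p$, so that branch reduces to a constant plus $\lambda A_p$, while the action-$2$ branch, via the decomposition, contributes $f(1)+g(A_s+1)$. Thus $\Delta=\text{(const)}+\lambda A_p-A_s-g(A_s+1)$, which for fixed $A_p$ is again non-increasing in $A_s$ by monotonicity of $g$. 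Consequently, if $\Delta(A_p,A_s,1,0)\le 0$ then $\Delta(A_p,A_s+1,1,0)\le 0$, giving the switch-type structure.

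The main obstacle is not the final algebra but ensuring that each component of Lemma~\ref{LemmaFunc} is applied exactly where needed. Two points are delicate. In the first type, the two action-$1$ successors must share the identical AoI pair $(A_p+1,1)$ so that the PU-dependent term cancels precisely and the threshold is genuinely $A_p$-independent. In the second type, the clean monotonicity hinges on the relay successors being constant in $A_p$: without that property a residual, possibly non-monotone, $A_p$-term would survive and the argument for a fixed-$A_p$ threshold would break. I would therefore verify the transition bookkeeping first and then let the separability and monotonicity of the value function carry the rest.
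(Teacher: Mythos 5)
Your proposal is correct and follows essentially the same route the paper intends: the paper defers details to its technical report but explicitly derives the theorem from Lemma~3 (monotonicity, decomposability $V_\lambda=f(A_p)+g(A_s)$, and constancy of the third-type states in $A_p$) applied to the Bellman equation, which is precisely your $Q$-value difference argument. Your bookkeeping of the transitions and of where each piece of Lemma~3 is invoked (cancellation of $f(A_p+1)$ for the first type, constancy of the relay successors for the second type) matches the structure the paper relies on, so nothing is missing.
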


The above theorem reveals some useful properties for improving the conventional value iteration algorithm. Normally, we have to calculate all the state-action value functions for every state. Many of these calculations can be omitted by simply comparing the AoI of the PU and the SU with the states whose optimal action has been figured out in each iteration.

To cope with the countable infinite state space, we construct an approximate MDP with a finite number of states by truncating the AoI with a predetermined value $\delta$. The state set of the approximate MDP is denoted by $\mathcal{S_\delta}$, where $\mathcal{S_\delta}=\{\boldsymbol{s}|\boldsymbol{s}\in\mathcal{S},A_p\le\delta,A_s\le\delta\}$. The overall steps for solving the unconstrained MDP are summarized in Algorithm 1. Note that it can be guaranteed that the optimal policies of the approximate MDP and the original MDP problem are asymptotically identical when $\delta\to\infty$ \cite{Hsu2017}.

\begin{algorithm}[t]
\caption{Structured Value Iteration Algorithm}\label{algorithm}
    \KwIn{$\lambda$, $\delta$, $\epsilon$, $\boldsymbol{s}_0$.}
    \kwInit{$V_{0,\lambda}(\boldsymbol{s})=0,\forall \boldsymbol{s}\in\mathcal{S}_\delta$.}
    \While{$\exists\boldsymbol{s}\in\mathcal{S}_\delta\; \rm s.t. |V_{n+1,\lambda}(\boldsymbol{s})-V_{n,\lambda}(\boldsymbol{s})|>\epsilon$}{
        \For{$\boldsymbol{s}=(A_p,A_s,\cdot,\cdot)\in\mathcal{S}_\delta$}{
            \uIf{$\boldsymbol{s}=(A_p,A_s,0,0)$ \rm and $\exists A'_s<A_s,A'_p>0, \pi^*_\lambda(A'_p,A'_s,0,0)=1$}{
                $\pi^*_\lambda(A_p,A_s,0,0)=1$
            }
            \uElseIf{$\boldsymbol{s}=(A_p,A_s,1,0)$ \rm and $\exists A'_s<A_s, \pi^*(A_p,A'_s,1,0)=1$}{
                $\pi^*_\lambda(A_p,A_s,1,0)=1$
            }
            \Else{
            $\pi^*_\lambda(\boldsymbol{s})=\arg\min_{\mathcal{A}} L(s,a;\lambda)+\mathbb{E}[V_{n,\lambda}(\boldsymbol{s}')]$
            }
            $V_{n+1,\lambda}(\boldsymbol{s})\hspace{-1mm}=\hspace{-1mm}L(\boldsymbol{s},\hspace{-0.6mm}\pi^*_\lambda(\boldsymbol{s});\hspace{-0.6mm}\lambda)\hspace{-0.8mm}+\hspace{-0.8mm}\mathbb{E}[V_{n,\lambda}(\boldsymbol{s}')]\hspace{-1mm}-\hspace{-1mm}V_{n,\lambda}(\boldsymbol{s}_0)$
        }
    }
\end{algorithm}

\vspace{-1.3mm}
\section{Numerical Results}
\vspace{-1.3mm}
In this section, we evaluate, using numerical simulations, the performance of the proposed scheme. We first verify the structures of the optimal policy in Theorem 1. The parameters are set as follows. $\lambda=0.9$, $C_e=8$, and $p=0.5$. We take $\delta=20$ as a small example to illustrate the structures. From Fig. 2(a), we can see that, for the first type state $(A_p,A_s,0,0)$, the optimal policy has no relationship with the AoI of the PU, and the SU would simply choose to generate and transmit a secondary packet if its AoI is larger than $3$, or keep silent in this time slot to save energy. Fig. 2(b) shows the structure of the optimal policy for the second type state $(A_p,A_s,1,0)$. It is clearly shown that the threshold which determines the decision of the SU increases with the AoI of the PU. This is because action $1$ is essentially to transmit the secondary packet by sacrificing one transmission opportunity of the PU. Thus it is cost-effective only when the status information of the SU at the destination is stale enough.

\begin{figure}[t]
    \begin{minipage}{0.45\linewidth}
    \centering
    \centerline{\includegraphics[scale=0.3]{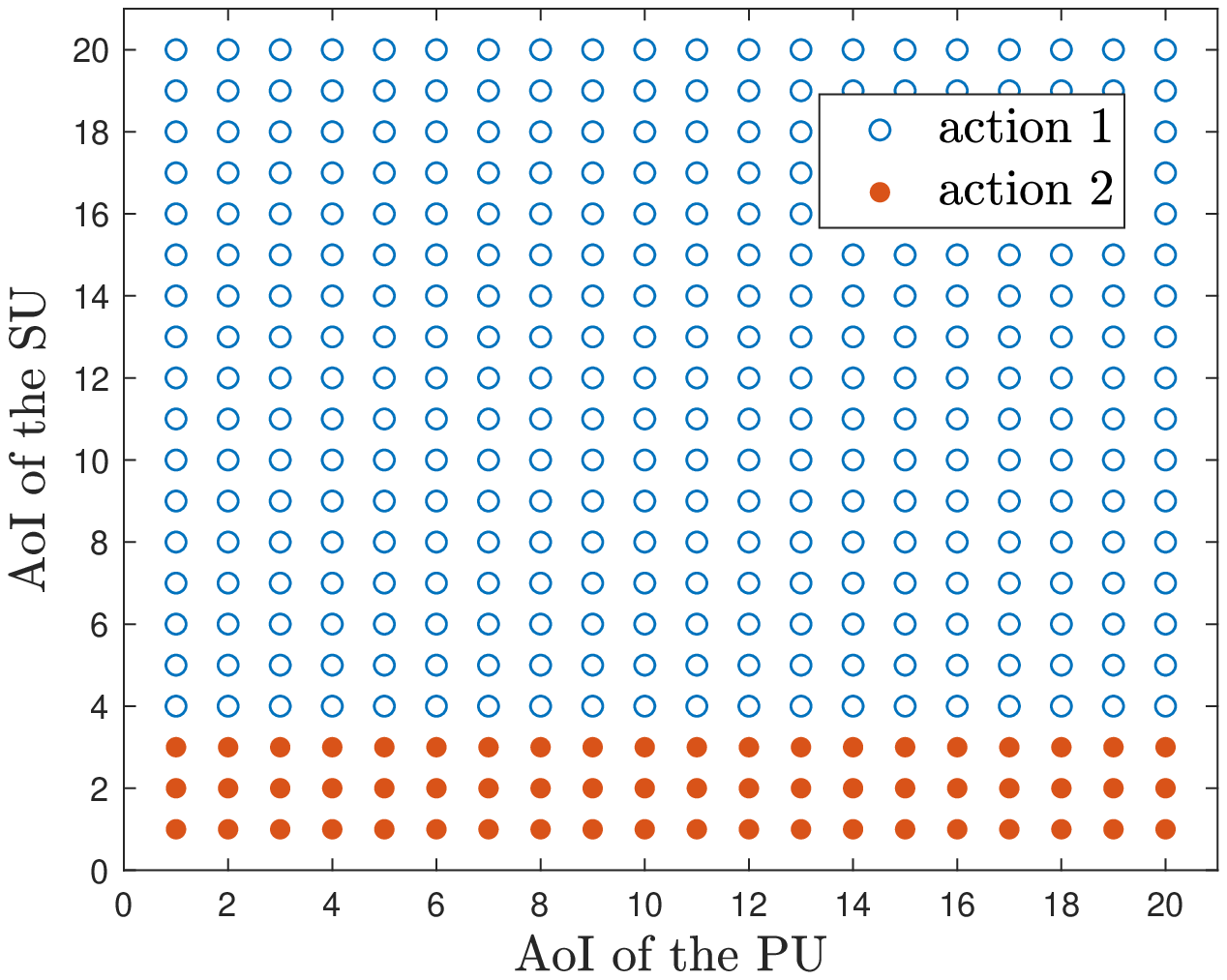}}
    \centerline{(a)}
    \medskip
    \end{minipage}
    \hfill
    \begin{minipage}{0.45\linewidth}
    \centering
    \centerline{\includegraphics[scale=0.3]{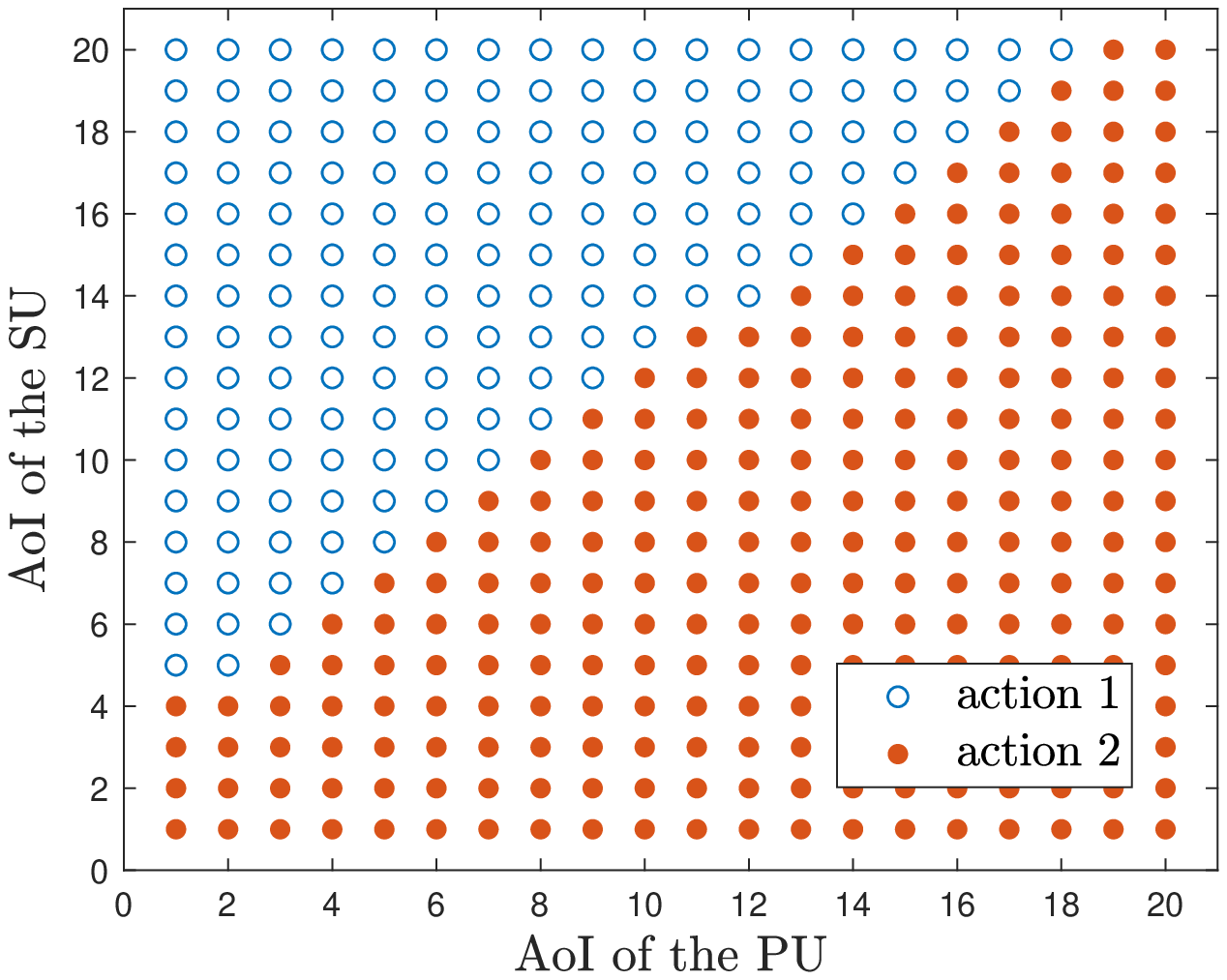}}
    \centerline{(b)}
    \medskip
    \end{minipage}
    \vspace{-2mm}
    \caption{Illustration about the different structures of the optimal policy. (a): First type state $(A_p,A_s,0,0)$; (b): Second type state $(A_p,A_s,1,0)$.}
    \vspace{-0.4cm}
\end{figure}

\begin{figure}[t]
    \epsfig{file=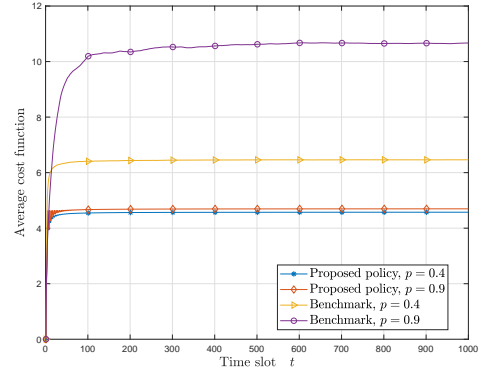,width=0.4\textwidth}
    \centering
    \vspace{-2mm}
    \caption{Convergence process of the proposed policy and the benchmark policy versus different $p$.}
    \vspace{-0.6cm}
\end{figure}

%\vspace{-1.5mm}

To benchmark our proposed policy, we compare it with a baseline policy. Under the baseline policy, the SU would transmit a secondary packet whenever there is no primary packet at the PU, and the SU would keep silent if there is a primary packet at the PU. This guarantees that the PU has a higher priority. Such a baseline policy is similar to the zero-waiting policy in \cite{Sun2017}. We set $\delta=200$.

As we can see in Fig. 3, the proposed policy achieves a lower cost up to $62\%$ than the baseline. Moreover, the performance of the benchmark policy is significantly affected by the parameter $p$ due to the lack of dynamically decision making compared to the proposed policy.
\vspace{-1.3mm}
\section{Conclusions}
\vspace{-1.3mm}
In this paper, we have studied the optimal update and relaying policy for the SU in a CRN. To optimize the AoI performance for both the PU and the SU, we have formulated a CMDP problem and have analyzed the structures of the optimal policy for the relaxed problem by showing the properties of the value function. It has been shown that the optimal policy is of threshold type and switch type in two different system states, respectively. We have also put forward a structure-aware value iteration algorithm by utilizing the discovered structures. Numerical simulations also show the effectiveness of the proposed policy over a benchmark policy.

\balance
\bibliographystyle{IEEEtran}
\bibliography{IEEEabrv,IEEEexample}

\end{document}